  \newtheorem*{thm}{Theorem}
  \newtheorem{prop}{Proposition}[section]
\newcommand{\diff}{\mathrm{d}}	
\newcommand{\Wass}{\mathcal{W}}
\newcommand{\Error}{\mathcal{E}}
\begin{document}
\color{black}
\title{Thermodynamic optimization of finite-time feedback protocols for Markov jump systems}
\author{Rihito Nagase}
\affiliation{Department of Applied Physics, The University of Tokyo, 7-3-1 Hongo, Bunkyo-ku, Tokyo 113-8656, Japan}
\author{Takahiro Sagawa}
\affiliation{Department of Applied Physics, The University of Tokyo, 7-3-1 Hongo, Bunkyo-ku, Tokyo 113-8656, Japan}
\affiliation{Quantum-Phase Electronics Center (QPEC), The University of Tokyo, 7-3-1 Hongo, Bunkyo-ku, Tokyo 113-8656, Japan}
\begin{abstract}
In recent advances in finite-time thermodynamics, optimization of entropy production required for finite-time information processing is an important issue. In this work, we consider finite-time feedback processes in classical discrete systems described by Markov jump processes, and derive achievable bounds on entropy production for feedback processes controlled by Maxwell's demons. The key ingredients of our approach is optimal transport theory and an achievable Fano's inequality, by which we optimize the Wasserstein distance over final distributions under fixed consumed information. Our study reveals the minimum entropy production for consuming a certain amount of information, and moreover, the optimal feedback protocol to achieve it. These results are expected to lead to design principles for information processing in various stochastic systems with discrete states.
\end{abstract}
\maketitle


\section{Introduction}\label{sec_1}
One of the key concepts in stochastic thermodynamics is entropy production~\cite{StochasticThermo,Seifert2012stochastic,landi2021irreversible}, which quantifies entropy production accompanying irreversibility of dynamics. Its fundamental bound is imposed by the second law of thermodynamics, which can be achieved in the quasi-static limit requiring infinite time. Beyond the quasi-static regime, recent advancements in stochastic thermodynamics have established various bounds on entropy production by incorporating finite-time effects, represented by thermodynamic speed limits~\cite{Aurell_JStatPhys_2012,Shiraishi_Funo_Saito_PhysRevLett.121.070601_2018,Ito_PhysRevLett.121.030605_2018,Ito_Dechant_PhysRevX.10.021056_2020,Vo_VanVu_Hasegawa_PhysRevE.102.062132_2020,Plata_PhysRevE.101.032129_2020,Falasco_Esposito_PhysRevLett.125.120604_2020,Yoshimura_Ito_PhysRevLett.127.160601_2021} and thermodynamic uncertainty relations~\cite{barato2015thermodynamic,gingrich2016dissipation,pietzonka2016universal,shiraishi2016universal,horowitz2017proof,dechant2018multidimensional,brandner2018thermodynamic,hasegawa2019uncertainty,koyuk2020thermodynamic,horowitz2020thermodynamic,liu2020thermodynamic}. In particular, the speed limits based on optimal transport theory~\cite{Jordan_Kinderlehrer_Otto_SIAM_1998,Benamou_Brenier_2000ACF,Maas_JFuncAnaly_2011gradient,Auruell_PhysRevLett.106.250601_2011,Auruell_JStatPhys_2012,Dechant2019thermodynamic,Chen_Georgiou_IEEE_2017matricial,Chen_Georgiou_IEEE_2019stochastic,Fu_Georgiou_Automatica_2021maximal,Miangolarra_IEEE_2022geometry,VanVu_Hasegawa_PhysRevLett.126.010601,Dechant_JPhys_2022,Chennakasavalu_PhysRevLett.130.107101_2023,T.V.Vu_PhysRevX.13.011013} provide achievable bounds for any finite operation times, by explicitly identifying the optimal protocols.\par
Recently, there has been progress in applying these finite-time frameworks to thermodynamics of information~\cite{Parrondo_Horowitz_Sagawa_NatPhys_2015}.
In the finite-time regime, the speed limits and the thermodynamic uncertainty relations have been extended to situations incorporating information processing~\cite{Zulkowski_PhysRevE.89.052140_2014,Proesmans_PhysRevE.102.032105_2020,Proesmans_PhysRevLett.125.100602_2020,Nakazato_Ito_PhysRevResearch.3.043093_2021,tanogami2023universal,xia2024efficiency,fujimoto2024game,nagase2024thermodynamically,Kamijima2024finite,Kamijima2024optimal}. In particular, optimal transport theory provides achievable bounds for the finite-time entropy production for various information processing including measurement, feedback and information erasure, when the initial and final probability distributions are fixed~\cite{Zulkowski_PhysRevE.89.052140_2014,Nakazato_Ito_PhysRevResearch.3.043093_2021,fujimoto2024game,Kamijima2024finite,Kamijima2024optimal}. However, optimizing the finite-time entropy productions for a given amount of processed information has been addressed only for  measurement processes~\cite{nagase2024thermodynamically}, and the optimal entropy productions for consuming a given amount of information through feedback processes have not been elucidated. 
\par

In this study, we consider optimization of feedback processes in classical discrete systems obeying Markov jump processes. Specifically, we determine the optimal entropy production for fixed consuming mutual information $|\Delta I|$. The optimization consists of two stages. The first stage is based on optimal transport theory, where we optimize entropy production over time-dependent protocols under fixed initial and identify final distributions~\cite{T.V.Vu_PhysRevX.13.011013}. In the second stage, we further optimize entropy production over final distributions under fixed consumed information $|\Delta I|$, and obtain the achievable lower bounds on entropy production and the optimal final distributions. These are derived by exploiting an achievable Fano's inequality~\cite{Sakai_e22030288_2020}, which bounds the conditional Shannon entropy by a function of error rate. Our results are relevant to experimental platforms such as single electron systems~\cite{koski2014experimental,koski2015chip,chida2017power}.\color{black}\par
The organization of this paper is as follows. In Sec.~\ref{sec_setup}, we describe our setup along with a brief review of thermodynamics of information, optimal transport theory and thermodynamical speed limits. In Sec.~\ref{sec_3}, we introduce an achievable Fano's inequality as a mathematical tool, and describe our main mathematical Theorem, and provide its proof. In Sec.~\ref{sec_4}, we present two bounds on entropy production of feedback process as physical consequences of the main Theorem. We also identify the optimal protocols to achieve these bounds. In
Sec.~\ref{sec_5}, we numerically demonstrate our bounds and the optimal protocols by a coupled two-level system. In Sec.~\ref{sec_6}, we summarize the
results of this paper and discuss future prospects.\color{black}
\section{Setup}\label{sec_setup}
\subsection{Thermodynamics of information}  
We consider a bipartite classical stochastic system consisting of subsystems \(X\) and \(Y\), which take discrete states. The entire system is in contact with a heat bath at inverse temperature \(\beta\), and its dynamics is described by a Markov jump process. System \(X\) is the target of feedback control, while \(Y\) plays the role of Maxwell's demon applying feedback operations to \(X\) based on the measurement results on \(X\)'s state.  

The set of possible states for \(X\) is \(\mathcal{X} \coloneqq \{1, 2, \dots, n\}\), and for \(Y\) is \(\mathcal{Y} \coloneqq \{1, 2, \dots, n\}\). Here \(y\in\mathcal{Y}\) corresponds to the measurement result on \(x\in\mathcal{X}\). The joint state of the entire system \(XY\) is represented by the pair \((x, y)\in\mathcal{X}\times\mathcal{Y}\), and the probability to find the system in state \((x, y)\) at time \(t\) is denoted as \(p_t^{XY}(x, y)\). The marginal probabilities for \(X\) and \(Y\) are defined as \(p_t^X(x) = \sum_y p_t^{XY}(x, y)\) and \(p_t^Y(y) = \sum_x p_t^{XY}(x, y)\), respectively.  

Since we focus on the feedback process, we assume that $X$ and $Y$ are correlated at the initial time $t=0$, as a consequence of the measurement performed beforehand. Here, we suppose that $Y$ is a controller that stores and uses the information of the target system $X$. Moreover, we make a simple assumption that the measurement is error-free. That is, the initial distribution is given of the form 
\begin{align}
  p_0^{XY} (x,y) = p^X(x) \delta_{x,y}\label{eq_1}
\end{align}
with $\delta_{x,y}$ being the Kronecker delta, which guarantees that $x=y$ holds with unit probability. During the feedback, the state $y$ of $Y$ is assumed to remain unchanged, i.e., no transitions occur between different states in $Y$. That is, transitions from \((x', y')\) to \((x, y)\) are prohibited if \(y \neq y'\). This assumption is reasonable for our classical stochastic processes, where the feedback operation on $X$ does not influence \(Y\)'s state itself \cite{Parrondo_Horowitz_Sagawa_NatPhys_2015}. Under this assumption, the marginal distribution \(p_t^Y(y)\) is fixed to a certain distribution \(p^Y(y)\) throughout the process, and the initial joint distribution is given by \(p_0^{XY}(x, y) = \delta_{x, y} p^Y(y)\).  

The time evolution of the entire system during the feedback process, from \(t = 0\) to \(t = \tau\), is described by the master equation:  
\begin{align}
  \frac{\diff}{\diff t} p_t^{XY}(x, y) = \sum_{x' : (x', x) \in \mathcal{N}_y} & \Big[ R_t^{X|y}(x, x') p_t^{XY}(x', y) \notag \\
  & - R_t^{X|y}(x', x) p_t^{XY}(x, y) \Big], \label{eq_2}
\end{align}
where \(R_t^{X|y}(x, x')\) is the transition rate from \((x', y)\) to \((x, y)\) at time \(t\). This rate describes the feedback operation that \(Y\) applies to \(x\) when the measurement result is \(y\). The set \(\mathcal{N}_y\) denotes the pairs of different \(X\)'s states \((x', x)\) that are allowed to transition when the measurement result is \(y\). These transitions are assumed to be bidirectional, meaning \((x', x) \in \mathcal{N}_y\) implies \((x, x') \in \mathcal{N}_y\). The stochastic heat \(Q_t^{X|y}(x, x')\) absorbed by \(X\) during a transition from \((x', y)\) to \((x, y)\) satisfies the local detailed balance condition:  
\begin{equation}
  \ln \frac{R_t^{X|y}(x, x')}{R_t^{X|y}(x', x)} = -\beta Q_t^{X|y}(x, x'). \label{eq_3}
\end{equation}

We next introduce mutual information, which represents the amount of information shared between \( X \) and \( Y \). The mutual information between \(X\) and \(Y\) at time \(t\) is defined as 
\begin{align}
  I_t^{X:Y} \coloneqq S(p_t^X) + S(p_t^Y) - S(p_t^{XY}), \label{eq_4}
\end{align}
where \(S(p)\) is the Shannon entropy of the probability distribution \(p\). If \(X\) and \(Y\) are uncorrelated, \(I_t^{X:Y} = 0\), while \(I_t^{X:Y} > 0\) otherwise. In our setup, with the assumptiuon (\ref{eq_1}), \(S(p_0^X) = S(p_0^Y) = S(p_0^{XY}) = S(p^Y)\), yielding \(I_0^{X:Y} = S(p^Y)\), which is the maximum mutual information for fixed \(p^Y\). This means that \( Y \) has fully acquired the information of \( X \) at time \( t = 0 \).

We also introduce entropy production, which represents  thermodynamic cost required for the feedback process. The entropy production from time \(0\) to $\tau$, denoted as \(\Sigma_\tau^{XY}\), is defined using the total heat absorbed by \(X\) up to time \(\tau\), \(Q_\tau^X \coloneqq \int_0^\tau \sum_{x,x',y} Q_t^{X|y}(x,x') R_t^{X|y}(x,x') p_t^{XY}(x',y) \, \mathrm{d}t\), as  
\begin{align}
    \Sigma_\tau^{XY} \coloneqq S(p_\tau^{XY}) - S(p_0^{XY}) - \beta Q_\tau^X. \label{eq_5}
\end{align}  
Here, the first two terms represent the entropy change of system \(XY\), and the third term represents the entropy change of the heat bath. Since \(Y\) does not undergo transitions, \(Q_\tau^X\) equals the total heat absorbed by the system \(XY\). Thus, \(\Sigma_\tau^{XY}\) represents the entropy change of the entire system including the heat bath, and describes dissipation due to irreversibility.\par
We here introduce the probability of a transition from state \((x',y)\) to \((x,y)\) denoted as \( j_t^{X|y}(x,x') \coloneqq R_t^{X|y}(x,x')p_t^{XY}(x',y) \), and the probability current from state \((x',y)\) to \((x,y)\) defined as \(J_t^{X|y}(x,x') = j_t^{X|y}(x,x') - j_t^{X|y}(x',x)\). We also define thermodynamic force for the transition from \((x',y)\) to \((x,y)\) as
\begin{equation}\label{eq_1_39}
  F_t^{X|y}(x,x')\coloneqq\ln\frac{j_t^{X|y}(x,x')}{j_t^{X|y}(x',x)}.
\end{equation}
Then, the entropy production rate $\sigma_t^{XY}\coloneqq\diff\Sigma_t^{XY}/\diff t$ can be expressed as
\begin{equation}\label{eq_1_40}
  \sigma_t^{XY}=\sum_{x>x'}J_t^{X|y}(x,x')F_t^{X|y}(x,x'),
\end{equation}
which satisfies the second law of thermodynamics \(\sigma_\tau^{XY} \geq 0\).
\par
entropy production for subsystem \(X\), \(\Sigma_\tau^X\), is defined as  
\begin{align}
  \Sigma_\tau^X \coloneqq S(p_\tau^X) - S(p_0^X) - \beta Q_\tau^X.  
\end{align}
Using this and the change in the mutual information \(\Delta I_\tau^{X:Y} \coloneqq I_\tau^{X:Y} - I_0^{X:Y}\), the entropy production can be decomposed as \(\Sigma_\tau^{XY} = \Sigma_\tau^X - \Delta I_\tau^{X:Y}\) \cite{Sagawa_Ueda_PhysRevLett.102.250602_2009}. In the present setup, since the maximum mutual information is stored at the initial time, we have \(\Delta I_\tau^{X:Y} \leq 0\). Therefore, the decomposition can be rewritten as  
\begin{align}
    \Sigma_\tau^{XY} = \Sigma_\tau^X + |\Delta I_\tau^{X:Y}|. \label{eq_6}
\end{align}  
By substituting this decomposition into the second law, we obtain 
\begin{align}
\Sigma_\tau^X\geq -\abs{\Delta I_\tau^{X:Y}}.\label{eq_6.5}
\end{align}
\color{black}
This indicates that by consuming mutual information, it is possible to achieve a lower entropy production than that determined by the second law for the case where \(Y\) is absent, \(\Sigma_\tau^X \geq 0\). We emphasize that the equality in (\ref{eq_6.5}) is achieved in the quasi-static limit, which requires infinite time \cite{Parrondo_Horowitz_Sagawa_NatPhys_2015}.

\subsection{Optimal transport theory and the speed limits}
We next briefly overview optimal transport theory. A key insight of optimal transport theory is that the minimized transport cost, known as the Wasserstein distance, serves as a metric between distributions. Optimal transport theory finds applications in fields such as image processing \cite{Haker_Zhu_Tannenbaum_Angenent_IJCV_2004}, machine learning \cite{Kolouri_7974883_2017}, and biology \cite{Schiebinger_Cell_2019}. Applying it to thermodynamics reveals that the achievable speed limits for entropy productions can be expressed in terms of the Wasserstein distance between initial and final distributions, enabling the identification of optimal protocols that achieve the equality.\par
The Wasserstein distance between two probability distributions \( p_0^{XY} \) and \( p_\tau^{XY} \) is defined as 
\begin{widetext}
\begin{align}
    \Wass(p_0^{XY}, p_\tau^{XY}) \coloneqq \min_{\pi \in \Pi(p_0^{XY}, p_\tau^{XY})} \sum_{x, x', y} d_{(x, y), (x', y)} \pi_{(x, y), (x', y)}, \label{eq_7}
\end{align}
\end{widetext}
where \(\pi_{(x, y), (x', y)} (\geq 0)\) represents the probability transported from state \((x', y)\) to state \((x, y)\). The collection of these probabilities for all states, \(\pi = \{\pi_{(x, y), (x', y)}\}_{x, x', y}\), is referred to as the transport plan. We denote by \(\Pi(p_0^{XY}, p_\tau^{XY})\) the set of all transport plans that transform the initial distribution \(p_0^{XY}\) into the final distribution \(p_\tau^{XY}\). This set satisfies the probability conservation conditions:  
\(\sum_{x \in \mathcal{X}} \pi_{(x, y), (x', y)} = p_0^{XY}(x', y)\),\ \(\sum_{x' \in \mathcal{X}} \pi_{(x, y), (x', y)} = p_\tau^{XY}(x, y)\). The coefficient \(d_{(x, y), (x', y)}\) is the minimum number of transitions required for moving from state \((x', y)\) to state \((x, y)\) under the Markov jump process defined in Eq.~(\ref{eq_2}), which determines the cost of transport per unit probability. Thus, the definition in Eq.~(\ref{eq_7}) indicates that by fixing the initial distribution \(p_0\) and the final distribution \(p_\tau\), the minimum total cost over all possible transport plans \(\pi\) can be defined as the distance between the distributions. The transport plan \(\pi\) that achieves this minimum cost is referred to as the optimal transport plan.\par
When probability distribution \( p_0^{XY} \) evolves into \( p_\tau^{XY} \) following Eq.~(\ref{eq_2}) from time \( t = 0 \) to \( t = \tau \), the entropy production is bounded as~\cite{T.V.Vu_PhysRevX.13.011013}  
\begin{equation}
    \Sigma_\tau^{XY} \geq \Wass\left(p_0^{XY}, p_\tau^{XY}\right) f\left(\frac{\Wass\left(p_0^{XY}, p_\tau^{XY}\right)}{D \tau}\right), \label{eq_8}
\end{equation}  
where \( f(x) \) is a function determined by the choice of fixed timescale \( D \). \par
We introduce two quantities representing the timescale: activity and mobility. We define activity \(a_t\) as
\begin{equation}\label{eq_1_41}
  a_t\coloneqq\sum_{y,\ x\neq x'}j_t^{X|y}(x,x')=\sum_{x>x'}a_t^{X|y}(x,x').
\end{equation}
Here, the local activity between states \((x, y)\) and \((x', y)\) is defined as \(a_t^{X|y}(x, x') \coloneqq j_t^{X|y}(x, x') + j_t^{X|y}(x', x)\), which corresponds to the number of jumps per unit time at time \(t\). Additionally, the time integral representing the total number of jumps from time \(0\) to \(\tau\) is expressed as \(A_\tau \coloneqq \int_0^\tau a_t \, \mathrm{d}t\), and its time average is given by \(\langle a \rangle_\tau \coloneqq A_\tau / \tau\). We define mobility $m_t$ as
\begin{equation}\label{eq_1_42}
  m_t\coloneqq\sum_{x>x'}\frac{J_t^{X|y}(x,x')}{F_t^{X|y}(x,x')}.
\end{equation}
This represents the probability currents induced by the thermodynamic forces applied to the system. As well as activity, the time integral is expressed as \(M_\tau \coloneqq \int_0^\tau m_t \, \mathrm{d}t\), and the time average is given by \(\langle m \rangle_\tau \coloneqq M_\tau / \tau\). \par 
 Depending on whether \( D \) corresponds to either of these two quantities, the function \( f \) in the lower bound (\ref{eq_8}) changes. If \( D \) represents the time-averaged mobility \(\langle m \rangle_\tau\), then \( f(x) = x \); if \( D \) represents the time-averaged activity \(\langle a \rangle_\tau\), then \( f(x) = 2 \tanh^{-1}(x) \).  

For either choice of \( D \), the optimal protocol \(\{R_t^{X|y}(x,x')\}\) that achieves the equality in inequality (\ref{eq_8}) can be constructed by the condition that the probability is transported from \( p_0^{XY} \) to \( p_\tau^{XY} \) along the optimal transport plan with a uniform and constant thermodynamic force.\par

By substituting the decomposition (\ref{eq_6}) into inequality (\ref{eq_8}), we obtain the fundamental bound on energy for the fixed initial probability distribution \( p_0^{XY} \) and the final distribution \( p_\tau^{XY} \)
\begin{align}
    \Sigma_\tau^X \geq -\abs{\Delta I_\tau^{X:Y}} + \Wass\left(p_0^{XY}, p_\tau^{XY}\right) f\left(\frac{\Wass\left(p_0^{XY}, p_\tau^{XY}\right)}{D \tau}\right). \label{eq_8.5}
\end{align}  
The equality is achieved by the same protocol that achieves inequality (\ref{eq_6}).\par
The minimum entropy production for consuming a certain amount of information via feedback processes is not determined by inequality (\ref{eq_8.5}) alone. This is because, for a fixed consumed information $|\Delta I_\tau^{X:Y}|$, there exist infinitely many possible final distributions $p_\tau^{XY}$, which have different Wasserstein distance \(\Wass\left(p_0^{XY}, p_\tau^{XY}\right)\). Therefore, we can further minimize the right-hand side of (\ref{eq_8.5}) which depends on \(\Wass\left(p_0^{XY}, p_\tau^{XY}\right)\) over $p_\tau^{XY}$ under fixed \( |\Delta I_\tau^{X:Y}| \), and determine the truly minimum entropy production for consuming $|\Delta I_\tau^{X:Y}|$.
%
%
%
%
%
%
\section{Fundamental bounds on the consumed information}\label{sec_3}
\subsection{Fano's inequality}
To address the above problem, we first introduce the mathematical tool called Fano's inequality. It enables us to evaluate the error that arises when a sender \( Y \) sends a message to a receiver \( X \) through a classical communication channel. Let the set of messages to be transmitted and received be \(\{1, 2, \dots, n\}\). Due to the presence of errors in the communication channel, when the sender \( Y \) sends a message \( y \), the receiver \( X \) may not always receive \( y \) with probability 1. The performance of the communication channel is characterized by the conditional probability distribution \( p^{Y|x}(y) \coloneqq p^{XY}(x, y)/p^X(x) \), which indicates the ambiguity about the original message \( y \) given the received message \( x \).

Consider the case where the sender \( Y \) stochastically transmits messages according to a distribution \( p^Y(y) \). Denote the joint probability that \( Y \) sends \( y \) and \( X \) receives \( x \) as \( p^{XY}(x, y) \). Now we aim to evaluate the error probability \( \Error \coloneqq \mathrm{Prob}(x \neq y)=\sum_{x \neq y} p^{XY}(x, y) \)) based on the channel performance \( p^{Y|x}(y) \). This evaluation is provided by Fano's inequality, and the traditional form is given by
\begin{equation}\label{eq_9}
    S^{Y|X} \leq -\Error \ln \Error - (1-\Error) \ln (1-\Error) + \Error \ln (n-1).
\end{equation}  
Here, the left-hand side represents the conditional Shannon entropy \( S^{Y|X} \coloneqq \sum_x p^X(x) S(p^{Y|x}) \), which quantifies the randomness of the communication channel's performance \( p^{Y|x}(y) \). The right-hand side is an increasing function of the error probability \(\Error\). Thus, Fano's inequality indicates that the randomness \( S^{Y|X} \) of the channel's performance leads to a higher error probability. The equality holds when the conditional probabilities \( p^{Y|x}(y) \) satisfies  
\begin{align}\label{eq_10}
    p^{Y|x}(y) =  
    \begin{cases} 
        1-\Error, & x = y, \\ 
        \Error/(n-1), & x \neq y.
    \end{cases}
\end{align}  However, depending on the probability distribution \( p^Y \), it might not be possible to construct \( p^{Y|x}(y) \) in accordance with Eq.~(\ref{eq_10}). Therefore, the traditional Fano inequality (\ref{eq_9}) is not generally achievable. 

On this matter, a previous research \cite{Sakai_e22030288_2020} has introduced a tighter version of Fano's inequality that can be achieved for any \( p^Y \). Applying it to the above setup of communication channel, we can summarize the achievable Fano's inequality as follows. For fixed \( p^Y \) and \( \Error \), there exists a probability distribution \( \tilde{p}_\Error^Y \) (whose construction will be explained below) determined by \( p^Y \) and \( \Error \), such that 

\begin{equation}\label{eq_11}
  S^{Y|X} \leq S\left(\tilde{p}_\Error^Y\right),
\end{equation}
which is achievable for any \( p^Y \).

Here, the construction of \( \tilde{p}_\Error^Y \) is as follows:
\begin{figure*}[tbp]
\centering
\includegraphics[scale=0.8]{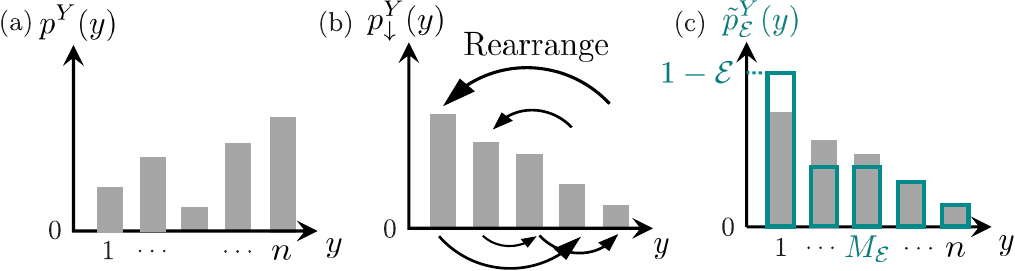}
\caption{The construction of \( \tilde{p}_\Error^Y \). (a) Histogram of \( p^Y \), the original probability distribution of \( Y \). (b) Histogram of \( p_\downarrow^Y \), the probability distribution obtained by rearranging \( p^Y \) in descending order. (c) Histogram of \( \tilde{p}_\Error^Y \).}
\label{fig_1}
\end{figure*}
\begin{enumerate}
  \item[(i)] Let $p_\downarrow^Y$ be the probability distribution obtained by rearranging $p^Y$(Fig.~\ref{fig_1}(a)) in descending order $p_\downarrow^Y(1)\geq p_\downarrow^Y(1)\geq\cdots\geq p_\downarrow^Y(n)$(Fig.~\ref{fig_1}(b). If \( p_\downarrow^Y(1) \geq 1 - \Error \), then \( \tilde{p}_\Error^Y \) is defined as \( \tilde{p}_\Error^Y = p_\downarrow^Y \). If \( p_\downarrow^Y(1) < 1 - \Error \), then \( \tilde{p}_\Error^Y \) is constructed according to the following steps.
  \item[(ii)] Define a function \( M_\Error \) that specifies a state of \( Y \) corresponding to \( \Error \) (the construction of the function \( M_\Error \) will be provided in step (iii)). Then, \( \tilde{p}_\Error^Y \) is defined as
  \begin{equation}
  \tilde{p}_\Error^Y(y)\coloneqq
  \begin{cases}
    1-\Error, & y=1\\
    \frac{\Error-\sum_{y=M_\Error+1}^np_\downarrow^Y(y)}{M_\Error-1}, & 2\leq y\leq M_\Error \\
    p_\downarrow^Y(y), & M_\Error<y\leq n.\label{eq_12}
  \end{cases}
  \end{equation}
  The corresponding histogram is shown in Fig.~\ref{fig_1}(c).
  \item[(iii)] \( M_\Error \) is defined as the largest \( m \in \{1, \cdots, n\} \) that satisfies  
  \begin{align}
    \frac{\Error - \sum_{y=m+1}^n p_\downarrow^Y(y)}{m - 1} < p_\downarrow^Y(m). \label{eq_15}
  \end{align}
When \( p_\downarrow^Y(1) < 1 - \Error \), such \( M_\Error \) is uniquely determined in the range \( 2 \leq M_\Error \leq n \). This can be interpreted as the largest \( M_\Error \) that ensures the probability distribution \( \tilde{p}_\Error^Y \) [defined in Eq.~(\ref{eq_12})] is in descending order.
  \end{enumerate}
The achievable Fano's inequality (\ref{eq_11}) can be regarded as an inequality that evaluates the upper bound of the conditional Shannon entropy \( S^{X|Y} \) when the error probability \( \Error \) and the marginal distribution \( p^Y \) are fixed. Therefore, this inequality can be applied not only to the communication channel setting described above, but also to general joint systems \( XY \) characterized by any joint probability distribution \( p^{XY}(x, y) \).
\subsection{Main theorem}
To obtain the fundamental bounds on entropy production required to consume information by the feedback processes, we consider the minimization problem of the second term on the right-hand side of inequality (\ref{eq_8.5}) for a fixed \( |\Delta I_\tau^{X:Y}| \). Since this term is a monotonically increasing function of \( \Wass \) for both $f(x)=x$ and $f(x)=2\tanh^{-1}(x)$, it suffices to minimize \( \Wass\left(p_0^{XY}, p_\tau^{XY}\right) \).

Our main theorem is now stated as follows, which will be used to solve the above problem in the next section.
\begin{thm}
  For a fixed \( p^Y \) and \( \Wass = \Wass\left(p_0^{XY}, p_\tau^{XY}\right) \),
  \begin{align}\label{eq_16}
    |\Delta I_\tau^{X:Y}|\leq S\left(\tilde{p}_\Wass^Y\right)
  \end{align}
  holds regardless of the structure of the entire system $XY$. $\tilde{p}_\Wass^Y$ is determined by identifying $\Error=\Wass$ in Eq.~(\ref{eq_12}). The equality can be achieved by setting the final distribution to a certain distribution (the explicit form will be described in the following section) if the entire system \( XY \) satisfies the following condition (C):
  \begin{itemize}
  \item[(C)] For all states \( y \in Y \), the direct transition in system \( X \) from the state \( x = y \) to any other state \( x' (\neq y) \) is possible. In other words,the direct transition from state \( (y, y) \) to \( (x', y) \) is possible.
\end{itemize}
\end{thm}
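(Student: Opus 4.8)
The plan is to express the consumed information $|\Delta I_\tau^{X:Y}|$ as a final conditional Shannon entropy, bound that entropy via the achievable Fano's inequality (\ref{eq_11}), and then control the error probability appearing there by the Wasserstein distance, exploiting that the dynamics (\ref{eq_2}) leaves $y$ unchanged. For the first step I would use $I_0^{X:Y}=S(p^Y)$ and the entropy chain rule $S(p_\tau^{XY})=S(p_\tau^X)+S_\tau^{Y|X}$, with $S_\tau^{Y|X}\coloneqq\sum_x p_\tau^X(x)S(p_\tau^{Y|x})$, together with $p_\tau^Y=p^Y$, to obtain $I_\tau^{X:Y}=S(p^Y)-S_\tau^{Y|X}$; since $\Delta I_\tau^{X:Y}\le 0$ in this setup, this yields the identity $|\Delta I_\tau^{X:Y}|=S_\tau^{Y|X}$, so the quantity to be bounded is exactly the final conditional entropy.

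Next I would apply the achievable Fano inequality (\ref{eq_11}) to the joint distribution $p_\tau^{XY}$, whose $Y$-marginal is the fixed $p^Y$, to get $|\Delta I_\tau^{X:Y}|=S_\tau^{Y|X}\le S(\tilde p_\Error^Y)$ with error probability $\Error\coloneqq\sum_{x\ne y}p_\tau^{XY}(x,y)$. I then claim $\Error\le\Wass$. By definition (\ref{eq_7}) transport only occurs within a fixed $y$-slice, and since $p_0^{XY}(\cdot,y)=p^Y(y)\delta_{\cdot,y}$ is a point mass on $x=y$, the marginal constraints force the transport plan to be $\pi_{(x,y),(y,y)}=p_\tau^{XY}(x,y)$ (all other entries zero); hence $\Wass=\sum_{y}\sum_{x\ne y}d_{(x,y),(y,y)}\,p_\tau^{XY}(x,y)\ge\sum_{y}\sum_{x\ne y}p_\tau^{XY}(x,y)=\Error$, using $d_{(x,y),(y,y)}\ge1$ for $x\ne y$. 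Finally, invoking that $\Error\mapsto S(\tilde p_\Error^Y)$ is nondecreasing — which I would justify from the majorization ordering of the family $\{\tilde p_\Error^Y\}$ in (\ref{eq_12}) and Schur concavity of the Shannon entropy — gives $S(\tilde p_\Error^Y)\le S(\tilde p_\Wass^Y)$, which establishes (\ref{eq_16}).

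For the equality statement, note that under condition (C) every state $x'\ne y$ is a direct neighbor of $y$ in the conditional $X$-graph, so $d_{(x,y),(y,y)}=1$ for all $x\ne y$ and the inequality above collapses to $\Wass=\Error$ for \emph{every} admissible final distribution. It then remains to choose $p_\tau^{XY}$ with $Y$-marginal $p^Y$, error probability equal to the prescribed $\Wass$, and conditional structure saturating (\ref{eq_11}) — the extremal configuration of Ref.~\cite{Sakai_e22030288_2020} associated with $\tilde p_\Wass^Y$ — whose explicit form will be written out in Sec.~\ref{sec_4}. This final distribution is dynamically reachable because (\ref{eq_8}) provides, for any target final distribution, a protocol attaining its bound (the operation time entering only through $f$), and (C) makes the conditional $X$-graph connected, so the target is a legitimate endpoint; for this choice, (\ref{eq_16}) holds with equality.

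The main obstacle is this achievability step: one must exhibit a single final distribution that simultaneously saturates Fano's inequality \emph{and} keeps the Wasserstein cost equal to the error probability, and check that it is a genuine probability distribution compatible with the frozen marginal $p^Y$. Condition (C) is precisely what removes the Wasserstein-side obstruction by forcing unit transport distance, reducing the task to the Fano-saturating construction. A secondary technical point is establishing the monotonicity of $S(\tilde p_\Error^Y)$ in $\Error$ used above, which is essentially a generalization of the concavity and monotonicity properties of the binary entropy function.
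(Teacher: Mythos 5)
Your treatment of the inequality itself is correct and follows the same skeleton as the paper's proof: the identity $|\Delta I_\tau^{X:Y}|=S_\tau^{Y|X}$, the achievable Fano bound (\ref{eq_11}) applied to $p_\tau^{XY}$, the monotonicity of $\Error\mapsto S(\tilde{p}_\Error^Y)$ established via majorization, and the comparison $\Error\leq\Wass$. Your argument for that last step is a genuinely different (and arguably sharper) sub-route: the paper lower-bounds $\Wass$ by the total variation distance and then evaluates the latter, whereas you note that the marginal constraints in (\ref{eq_7}) together with $p_0^{XY}(x',y)=p^Y(y)\delta_{x',y}$ force the transport plan to be the unique $\pi_{(x,y),(y,y)}=p_\tau^{XY}(x,y)$, so that $\Wass=\sum_{y}\sum_{x\neq y}d_{(x,y),(y,y)}\,p_\tau^{XY}(x,y)\geq\Error$. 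This buys you an exact formula for $\Wass$ and makes $\Wass=\Error$ under condition (C) immediate for \emph{every} admissible final distribution, a fact the paper establishes separately in the Appendix.

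The achievability half, however, is not actually carried out. You correctly reduce it to exhibiting a joint distribution with $Y$-marginal $p^Y$, error probability equal to the prescribed $\Wass$, and every conditional entropy $S(p_\tau^{Y|x})$ equal to $S(\tilde{p}_\Wass^Y)$, and you correctly observe that (C) removes the transport-side obstruction; but you then defer the construction to ``the extremal configuration of Ref.~\cite{Sakai_e22030288_2020}'' without verifying that a single distribution has all three properties simultaneously. The paper does this explicitly: it sets $p_\tau^{Y|x}$ to be a permutation of $\tilde{p}_\Wass^Y$ placing the mass $1-\Wass$ on $y=x$ [Eq.~(\ref{eq_20})], chooses the specific weights $p_\tau^X(x)=[p^Y(x)-\tilde{p}_\Wass^Y(2)]/[\tilde{p}_\Wass^Y(1)-\tilde{p}_\Wass^Y(2)]$ on the top-$M_\Wass$ states [Eq.~(\ref{eq_19})], and then checks by direct computation that these permuted conditionals recombine into exactly the frozen marginal $p^Y$, that the error probability is $1-\tilde{p}_\Wass^Y(1)=\Wass$, and that $S_\tau^{Y|X}=S(\tilde{p}_\Wass^Y)$. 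The marginal check is the nontrivial step your sketch glosses over, since each conditional is a \emph{different} permutation of $\tilde{p}_\Wass^Y$ and only this particular choice of $p_\tau^X$ makes the mixture come out to $p^Y$; citing the achievability of (\ref{eq_11}) alone does not discharge it within your write-up. So: inequality part complete and slightly improved; equality part a correct reduction with the constructive verification missing.
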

In our setup, we assumed that at \(t=0\), \(x = y\) holds with probability 1 through an error-free measurement. Therefore, the condition (C) indicates that regardless of the measurement result \(y\), the direct transition from the initial state \(x = y\) to any other state \(x' (\neq y)\) is allowed in the feedback control. Inequality (\ref{eq_16}) provides an upper bound on the mutual information that can be consumed in a finite-time feedback process when the Wasserstein distance is fixed. Since the right-hand side, \( S\left(\tilde{p}_\Wass^Y\right) \), is an increasing function of \( \Wass \), this implies that the larger the Wasserstein distance between the initial and final distributions, the more mutual information can be consumed by the feedback prosess. \par
This Theorem comprises two statements: first, that inequality (\ref{eq_16}) holds for any system $XY$ which satisfies our setup; second, that if the system $XY$ meets the condition (C), the equality in (\ref{eq_16}) can be achieved by setting $p_\tau^{XY}$ to an optimal distribution. We give the proof of these two statements in the two following subsections respectively.
\subsection{Proof of inequality (\ref{eq_16})}
We here prove inequality (\ref{eq_16}). By transforming Eq.~(\ref{eq_4}), the mutual information can be rewritten as 
\[
I_t^{X:Y} = S\left(p_t^Y\right) - S_t^{Y|X},
\]
where \( S_t^{Y|X} \) denotes the conditional Shannon entropy of \( Y \) given \( X \) at time \( t \).  
In the present setting, \( p_t^Y = p^Y \) is fixed, and initially, \( x = y \) with probability 1. Therefore, \( S_0^{Y|X} = 0 \). Thus, we obtain
\begin{align}
  \abs{\Delta I_\tau^{X:Y}}&=S_\tau^{Y|X}.
\end{align}
Let the probability that \( x \neq y \) holds at time \( \tau \) be \( \Error_\tau \coloneqq \sum_{x \neq y} p_\tau^{XY}(x, y) \). From achievable Fano's inequality (\ref{eq_11}), we obtain
\begin{align}
  |\Delta I_\tau^{X:Y}| \leq S\left(\tilde{p}_{\Error_\tau}^Y\right). \label{eq_18}
\end{align}
Here, \( S\left(\tilde{p}_{\Error_\tau}^Y\right) \) is an increasing function of \( \Error_\tau \), and \( \Error_\tau \leq \Wass \) holds (the proofs are provided in the Appendix). Substituting these into inequality (\ref{eq_18}) yields the main Theorem (\ref{eq_16}).  

\subsection{Proof of the optimality}
We prove that the equality in (\ref{eq_16}) can be achieved by setting $p_\tau^{XY}$ to an optimal distribution if the entire system $XY$ satisfies the condition (C). First, we will describe the method for constructing the optimal final distribution \(p_\tau^{XY}\). Next, we will prove that this final distribution satisfies the constraints \(p_\tau^Y = p^Y\) and \(\Wass(p_0^{XY},p_\tau^{XY}) = \Wass\). Finally, we will show that this final distribution achieves the equality in (\ref{eq_16}).\par
The optimal final distribution \( p_\tau^{XY} \) can be constructed as follows. Since the joint probability distribution of the entire system can be expressed as \( p_\tau^{XY}(x, y) = p_\tau^X(x) p_\tau^{Y|x}(y) \), it suffices to define \( p_\tau^X(x) \) and \( p_\tau^{Y|x}(y) \). Here, let \( \sigma_{p^Y} \) denote the permutation that rearranges the states \( y \) in descending order of the probability distribution \( p^Y \). In other words, \( \sigma_{p^Y}(y) = n \) means that \( y \) corresponds to the \( n \)-th largest value of \( p^Y(y) \). If there are states with equal probabilities, their ordering in \( \sigma_{p^Y} \) can be chosen arbitrarily; once \( \sigma_{p^Y} \) is defined, it must be fixed throughout the following discussion.\par
The optimal \( p_\tau^X(x) \) is then defined as  
  \begin{align}\label{eq_19}
  p_\tau^X(x)=
  \begin{cases}
    \displaystyle\frac{p^Y(x)-\tilde{p}_\Wass^Y(2)}{\tilde{p}_\Wass^Y(1)-\tilde{p}_\Wass^Y(2)},\quad& 1\leq\sigma_{p^Y}(x)\leq M_\Wass,\\
    0,&\sigma_{p^Y}(x)> M_\Wass.
  \end{cases}
\end{align}
From the definition of \( \tilde{p}_\Wass^Y \), we have \( \tilde{p}_\Wass^Y(1) = 1 - \Wass \) and  
$\tilde{p}_\Wass^Y(2) = \frac{\Wass - \sum_{y=M_\Wass+1}^n p_\downarrow^Y(y)}{M_\Wass - 1}$. The optimal conditional probability distribution \( p_\tau^{Y|x}(y) \) is given as  
\begin{align}\label{eq_20}
  p_\tau^{Y|x}(y)=
  \begin{cases}
    \tilde{p}_\Wass^Y(1), & y=x,\\
    \tilde{p}_\Wass^Y(x),&\sigma_{p^Y}(y)=1,\\
    \tilde{p}_\Wass^Y(\sigma_{p^Y}(y)),&\mathrm{otherwise.}
  \end{cases}
\end{align}
\begin{figure*}[tbp]
\centering
\includegraphics[scale=0.9]{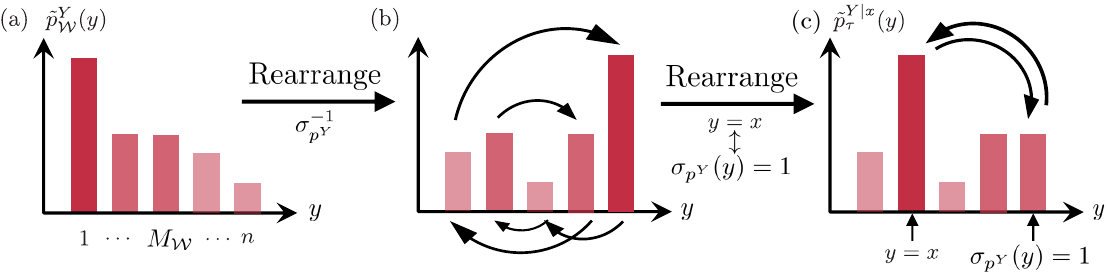}
\caption{(a) Histogram of \( \tilde{p}_\Wass^Y \). (b) Histogram of the probability distribution obtained by rearranging \( \tilde{p}_\Wass^Y \) such that the descending order of probability matches that in \( p^Y \). (c) Histogram of the optimal conditional probability distribution \( p_\tau^{Y|x} \), constructed by swapping the probabilities of the state with the highest probability in \( \tilde{p}_\Wass^Y \) and the state \( y = x \).}
\label{fig_2}
\end{figure*}
This procedure for constructing the optimal probability distribution can also be visualized graphically as follows: begin with  \( \tilde{p}_\Wass^Y \) (Fig.~\ref{fig_2} (a)), which by definition is a probability distribution arranged in descending order along the states \( y \). Apply the permutation \( \sigma_{p^Y}^{-1} \) to \( \tilde{p}_\Wass^Y \) (Fig.~\ref{fig_2} (b)), which rearranges the states such that the descending order of probabilities matches that of \( p^Y \). Then, swap the probability of the state \( y = x \) with the probability of the state with the largest probability (i.e., the state satisfying \( \sigma_{p^Y}(y) = 1 \)) (Fig.~\ref{fig_2} (c)). From the second operation, the optimal conditional probability distribution \( p_\tau^{Y|x} \) satisfies \( p_\tau^{Y|x}(x) = \tilde{p}_\Wass^Y(1) = 1 - \Wass \).\par
We next show that its marginal distribution matches the fixed \( p^Y \). For $y$ such that $\sigma_{p^Y}(y)>M_\Wass$ holds, from Eq.~(\ref{eq_20}), $p_\tau^{Y|x}(y)=\tilde{p}_\Wass^Y(\sigma_{p^Y}(y))=p^Y(y)$ holds independently of $x$, which yields $\sum_xp_\tau^{XY}(x,y)=p^Y(y)$. For $y$ such that $\sigma_{p^Y}(y)\leq M_\Wass$ holds, from Eq.~(\ref{eq_20}) and the definition of $\tilde{p}_\Wass^Y$, $p_\tau^{Y|x}(y)$ takes the value $\tilde{p}_\Wass^Y(1)$ for $y=x$ and $\tilde{p}_\Wass^Y(2)$ for $y\neq x$. Therefore,
\begin{align}
  \sum_{x=1}^np_\tau^{XY}(x,y)&=\sum_{x=1}^np_\tau^X(x)p_\tau^{Y|x}(y)\label{eq_21}\\
  &=p_\tau^X(y)\tilde{p}_\Wass^Y(1)+[1-p_\tau^X(y)]\tilde{p}_\Wass^Y(2)\label{eq_22}\\
  &=p^Y(y).\label{eq_23}
\end{align}
Here, we used Eq.~(\ref{eq_19}) for the transformation from Eq.~(\ref{eq_22}) to Eq.~(\ref{eq_23}).

We next show that under the condition (C), the Wasserstein distance between the constructed \( p_\tau^{XY} \) and \( p_0^{XY} \) equals the fixed value \( \Wass \). When the condition (C) is satisfied, \( \Error_\tau = \Wass\left(p_0^{XY}, p_\tau^{XY}\right) \) holds (the proof is provided in the Appendix). Therefore, we obtain
\begin{align}
  \Wass\left(p_0^{XY}.p_\tau^{XY}\right)&=\Error_\tau\\
  &=1-\sum_{x,y:x=y}p_\tau^{XY}(x,y)\\
  &=1-\sum_{x}p_\tau^X(x)p_\tau^{Y|x}(x)\\
  &=1-\tilde{p}_\Wass^Y(1)\\
  &=\Wass.
\end{align}
We finally show that the constructed \( p_\tau^{XY} \) is optimal, i.e., it achieves the equality in (\ref{eq_16}). From Eq.~(\ref{eq_20}), since \( S\left(p_\tau^{Y|x}\right) = S\left(\tilde{p}_\Wass^Y\right) \) holds regardless of \( x \),  
\begin{align}
  \abs{\Delta I_\tau^{X:Y}}&=S_\tau^{Y|X}\\
  &=\sum_xp_\tau^X(x)S\left(p_\tau^{Y|x}\right)\\
  &=S\left(\tilde{p}_\Wass^Y\right)
\end{align}
holds. Therefore, the equality in (\ref{eq_16}) can be achieved.\color{black}
\section{Fundamental bound on entropy production}\label{sec_4}
\subsection{Speed limit for fixed mobility}
In the previous section, we considered the maximization problem for the consumed information \( |\Delta I_\tau^{X:Y}| \) for fixed \( \Wass\left(p_0^{XY}, p_\tau^{XY}\right) \), whose solution is given by the main Theorem (\ref{eq_16}). Since the original problem of minimizing \( \Wass\left(p_0^{XY}, p_\tau^{XY}\right) \) for fixed \( |\Delta I_\tau^{X:Y}| \) corresponds to the dual problem, its solution is directly derived by the main Theorem. From this dual solution, we can obtain the fundamental bound on entropy production required to consume \( |\Delta I_\tau^{X:Y}| \).\par
First, the right-hand side of inequality (\ref{eq_16}), \( S\left(\tilde{p}_\Wass^Y\right) \), is a strictly increasing function of \( \Wass \) in the range \( 0 \leq \Wass \leq 1 - p_\downarrow^Y(1) \), and takes a constant value \( S\left(p^Y\right) \) for \( \Wass \geq 1 - p_\downarrow^Y(1) \) (the proof is provided in the Appendix). Therefore, given a fixed \( p^Y \), the inverse function \( \Wass_{p^Y}^{\min} : [0, S\left(p^Y\right)] \to [0, 1 - p_\downarrow^Y(1)] \) can be defined as \( S\left(\tilde{p}_{\Wass_{p^Y}^{\min}(I)}\right) = I \). This function provides the minimum Wasserstein distance \( \Wass\left(p_0^{XY}, p_\tau^{XY}\right) \) for the fixed amount of consumed information , \( |\Delta I_\tau^{X:Y}| \).\par
Here, considering Eq.~(\ref{eq_8.5}) with \( D \) chosen as the time-averaged mobility \( \langle m \rangle_\tau \), the right-hand side becomes an increasing function of \( \Wass\left(p_0^{XY}, p_\tau^{XY}\right) \). Therefore, minimizing \( \Wass\left(p_0^{XY}, p_\tau^{XY}\right) \) is equivalent to minimizing entropy production $\Sigma_\tau$. This allows us to derive the speed limit in the feedback process as
\begin{align}\label{eq_32}
  \Sigma_\tau^X\geq-\abs{\Delta I_\tau^{X:Y}}+\frac{{\Wass_{p^Y}^{\min}\left(\abs{\Delta I_\tau^{X:Y}}\right)}^2}{\tau\langle m\rangle_\tau},
\end{align}
which provides the minimum entropy production required to consume information \( |\Delta I_\tau^{X:Y}| \) through feedback processes keeping time-averaged mobility \( \langle m \rangle_\tau \) fixed.\par
The equality in inequality (\ref{eq_32}) is achieved by simultaneously satisfying the equalities in inequalities (\ref{eq_8}) and (\ref{eq_16}). First, the equality in (\ref{eq_16}) is  achieved by setting the final distribution \( p_\tau^{XY} \) to the form determined by Eqs.~(\ref{eq_19}) and (\ref{eq_20}), while satisfying condition (C). Next, the equality in (\ref{eq_8}) is achieved by constructing a protocol \(\{R_t^{X|y}(x,x')\}_{t=0}^\tau\) that transports the probability between \(p_0^{XY}\) and \(p_\tau^{XY}\) along the optimal transport plan, with uniform and constant thermodynamic force. Consequently, inequality (\ref{eq_32}) is satisfied. This highlights the thermodynamic implication of our main Theorem.
\subsection{Speed limit for fixed activity}
When we choose time-averaged activity \(\langle a \rangle_\tau\) as \(D\), the second term on the right-hand side of inequality (\ref{eq_8}) becomes \(2\Wass\tanh^{-1}[\Wass/(\tau\langle a \rangle_\tau)]\). Similarly to the case where mobility is fixed, this term is an increasing function of \(\Wass\) for any fixed \(\tau\) and \(\langle a \rangle_\tau\). Therefore, by applying the main Theorem, we can derive another speed limit
\begin{widetext}
  \begin{align}\label{eq_33}
  \Sigma_\tau^X\geq-\abs{\Delta I_\tau^{X:Y}}+2\Wass_{p^Y}^{\min}\left(\abs{\Delta I_\tau^{X:Y}}\right)\tanh^{-1}\frac{\Wass_{p^Y}^{\min}\left(\abs{\Delta I_\tau^{X:Y}}\right)}{\tau\langle a\rangle_\tau},
\end{align}
\end{widetext}
which provides the minimum entropy production required to consume the information \(\abs{\Delta I_\tau^{X:Y}}\) in feedback control when time-averaged activity \(\langle a \rangle_\tau\) is fixed. 

The equality in inequality (\ref{eq_33}) is achieved by simultaneously achieving the equalities in inequalities (\ref{eq_8}) and (\ref{eq_16}). First, the equality in (\ref{eq_16}) is realized by setting the final distribution \(p_\tau^{XY}\) to the form determined by Eqs.~(\ref{eq_19}) and (\ref{eq_20}), while satisfying condition (C). Next, the equality in (\ref{eq_8}) is achieved by constructing a protocol \(\{R_t^{X|y}(x,x')\}_{t=0}^\tau\) that transports the probability between the specified initial and final distributions \(p_0^{XY}\) and \(p_\tau^{XY}\) along the optimal transport plan, with uniform and constant thermodynamic force. Consequently, inequality (\ref{eq_33}) is achieved. This again highlights the thermodynamic implication of the main Theorem.
\section{Example: Coupled two level system}\label{sec_5}
We demonstrate our main inequalities (\ref{eq_32}) and (\ref{eq_33}) by numerical simulation of a coupled system which can take two states, \(x = 1, 2\) and \(y = 1, 2\). Such a setup may be realized, for instance, by confining colloidal particles to two positions corresponding to \(1\) and \(2\) using an optical potential, with coupling mediated by Coulomb interactions. In this case, if the time evolution of the entire system is governed by Eq.~(\ref{eq_2}), the controllable transition rates are given by the set \(\{R_t^{X|1}(2,1), R_t^{X|1}(1,2), R_t^{X|2}(2,1), R_t^{X|2}(1,2)\}\). Throughout this section, we represent the probability distribution as a matrix \(p_t^{XY} = \left[p_t^{XY}(x,y)\right]_{x,y}\).\par
First, when fixing \(p^Y(1) = p\ (< 1/2)\), the initial distribution \(p_0^{XY}\) is given as
\begin{align}
  p_0^{XY}&=\left[
\begin{array}{cc}
p & 0 \\
0 & 1-p
\end{array}\right].\label{eq_34}
\end{align}
The equality in the bound (\ref{eq_32}) is achieved when the equalities in both (\ref{eq_8}) and (\ref{eq_16}) are simultaneously satisfied. By setting $\Wass = \Wass_{p^Y}^\mathrm{min}\left(\abs{\Delta I_\tau^{X:Y}}\right)$ with respect to the consumed mutual information $|\Delta I_\tau^{X:Y}|$, the optimal final distribution that satisfies the equality in (\ref{eq_16}) is obtained as
\begin{align}
p_\tau^{XY}&=\left[
\begin{array}{cc}
\frac{p-\Wass}{1-2\Wass}(1-\Wass) & \frac{p-\Wass}{1-2\Wass}\Wass \\
\frac{1-p-\Wass}{1-2\Wass}\Wass & \frac{1-p-\Wass}{1-2\Wass}(1-\Wass)
\end{array}\right].\label{eq_35}
\end{align}
For this final distribution, the protocol that achieves the equality in (\ref{eq_8}) is realized by transporting probabilities along the optimal transport plan from \( p_0^{XY} \) to \( p_\tau^{XY} \) under a constant and uniform thermodynamic force \( F \). Given the protocol \(\{R_t^{X|y}(x,x')\}\) (the detailed form of the protocol is provided in the Appendix), we can calculate entropy production \(\Sigma_\tau^X\) for it. A numerical calculation of \(\Sigma_\tau^X\) for each \(\abs{\Delta I_\tau^{X:Y}}\) is shown as the orange line in Fig.~\ref{fig_num1} (a).
\begin{figure*}[tbp]
\centering
\includegraphics[scale=0.8]{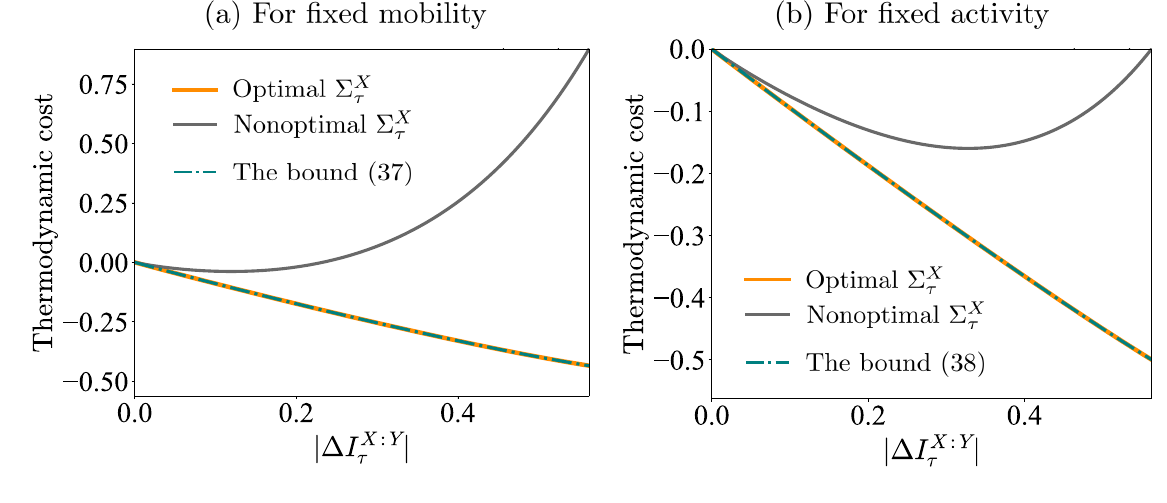}
\caption{(a) Numerical results with fixed mobility. entropy production for non-optimal final distributions does not achieve the bound (\ref{eq_32}), whereas entropy production for optimal final distributions achieves it. The parameters used are \(p = 0.75\), \(\langle m \rangle_\tau = 1\), and \(\tau = 1\). (b) Numerical results with fixed activity. entropy production for non-optimal final distributions does not achieve the bound (\ref{eq_33}), whereas entropy production for optimal final distributions does achieve the bound (\ref{eq_33}). The parameters used are \(p = 0.75\), \(\langle a \rangle_\tau = 1\), and \(\tau = 1\).}
\label{fig_num1}
\end{figure*}
Comparing this result with the right-hand side of (\ref{eq_32}) computed for each \(\abs{\Delta I_\tau^{X:Y}}\) (shown as the green dotted line in Fig.~\ref{fig_num1} (a)), it is evident that this protocol achieves the equality in the bound (\ref{eq_33}). Notably, the fact that entropy production is negative indicates that the feedback using information enables achieving an entropy production smaller than that imposed by the conventional thermodynamic second law, \(\Sigma_\tau^X \geq 0\). In other words, it implies the ability to extract work exceeding the negative of the change in nonequilibrium free energy.  

Additionally, we also show examples of non-optimal protocols. Specifically, we consider the case where the equality in (\ref{eq_8}) is achieved but the equality in (\ref{eq_16}) is not. For instance, for the same initial distribution, constructing a final distribution of the form 
\begin{align}
p_\tau^{XY}=\left[
\begin{array}{cc}
p-\Delta & 0 \\
\Delta & 1-p
\end{array}\right],\label{eq_36}
\end{align}
by choosing \(\Delta\) such that the mutual information change remains fixed at \(\abs{\Delta I_\tau^{X:Y}}\) yields a non-optimal final distribution for consuming \(\abs{\Delta I_\tau^{X:Y}}\). For this distribution, numerical calculations of entropy production \(\Sigma_\tau^X\) under a protocol \(\{R_t^{X|y}(x,x')\}\) that achieves the equality in (\ref{eq_8}) (the detailed form of the protocol is provided in the Appendix) are shown as the gray line in Fig.~\ref{fig_num1}(a). Since it does not achieve the equality in (\ref{eq_32}), it is evident that this is not an optimal feedback protocol for consuming \(\abs{\Delta I_\tau^{X:Y}}\).
                    
We next numerically demonstrate the bound (\ref{eq_33}) for fixed activity. For the initial distribution given by (\ref{eq_34}), we can take an optimal final distribution given by (\ref{eq_35}) and a non-optimal final distribution given by (\ref{eq_36}). For these final distributions, the protocols that achieve the equality in (\ref{eq_8}) transport probabilities from \(p_0^{XY}\) to \(p_\tau^{XY}\) along the optimal transport plan under a uniform and constant thermodynamic force. When \(D\) represents the time-averaged activity $\langle a\rangle_\tau$ fixed as $\langle a\rangle_\tau=A$, maintaining a constant thermodynamic force corresponds to setting the probability current \(J_t^{X|y}(x,'x)\) to a constant value $J$. Numerical calculations of entropy production \(\Sigma_\tau^X\) for such a protocol, \(\{R_t^{X|y}(x,x')\}\) (the detailed form of the protocol is provided in the Appendix), are shown in Fig.~\ref{fig_num1} (b). The orange line corresponds to the optimal final distribution, while the gray line corresponds to the non-optimal final distribution. Comparing these results with the right-hand side of (\ref{eq_33}) computed for each \(\abs{\Delta I_\tau^{X:Y}}\) (shown as the green dotted line in Fig.~\ref{fig_num1}(a)), it can be seen that the bound (\ref{eq_33}) is achieved for the optimal final distribution, while it is not achieved for the non-optimal final distribution.
\section{Conclusion}\label{sec_6}
In this study, we have derived the main Theorem (\ref{eq_16}) which provides the upper bound on the change in mutual information under the fixed error-free initial distribution, the fixed marginal distribution of the controller, and the fixed Wasserstein distance. The equality can be achieved under the condition (C) by setting the optimal final distribution. This Theorem gives the maximum change in mutual information that can be induced under a fixed minimum dissipation when the controller performs feedback control based on error-free measurement outcomes. Based on this Theorem, we have derived the speed limits (\ref{eq_32}) and (\ref{eq_33}) which give the lower bounds on entropy production for consuming a fixed amount of information through the feedback process. These bounds are achievable under the condition (C), and we have identified the optimal protocols. While a similar optimization problem for measurement processes has been solved in our previous work \cite{nagase2024thermodynamically}, the present work for feedback processes is based on largely different mathematical techniques including the achievable Fano's inequality (\ref{eq_9})~\cite{Sakai_e22030288_2020}, and therefore separate treatments are required for measurement and feedback processes.\par
Our results theoretically advance thermodynamics of information by specifying the form of contribution of processed information to finite-time entropy production. The optimal feedback protocols identified in this study can be applied to a wide range of feedback processes described by Markov jump processes, which would be experimentally implemented by single electron devices~\cite{koski2014experimental,koski2015chip,chida2017power}, double quantum dots in the classical regime~\cite{Schaller_PhysRevB.82.041303_2010,annby2020maxwell}, and effective discrete dynamics by Brownian nanoparticles~\cite{gingrich2017inferring}. Future directions also include extending our theoretical framework to continuous or quantum systems. 
\section*{ACKNOWLEDGEMENTS}
We thank Kosuke Kumasaki and Yosuke Mitsuhashi for valuable discussions. R.N. is supported by the World-leading Innovative Graduate Study Program for Materials Research, Industry, and Technology (MERIT-WINGS) of the University of Tokyo. T.S. is supported by Japan Society for the Promotion of Science (JSPS) KAKENHI Grant No. JP19H05796, JST, CREST Grant No. JPMJCR20C1 and JST ERATO-FS Grant No. JPMJER2204. T.S. is also supported by the Institute of AI and Beyond of the University of Tokyo and JST ERATO Grant No. JPMJER2302, Japan. 

%
%
%
%
%
\begin{appendix}
  \section{Derivation of the main Theorem and the speed limits}
  We provide the full proof of the main Theorem (\ref{eq_16}) and the speed limits (\ref{eq_32}) and (\ref{eq_33}).
  \subsection{Majorization}
We introduce the concept of majorization, which is used to provide an upper bound on mutual information. For two probability distributions \( p^X \) and \( q^X \), we denote that \( p^X \) majorizes \( q^X \), or \( q^X \prec p^X \), when they satisfy~\cite{bhatia2013matrix,Sagawa_Majorization_2022}
\begin{equation}
q^X\prec p^X\Longleftrightarrow\forall x\in\mathcal{X},\ \sum_{x'=1}^{x}q_\downarrow^X(x')\leq \sum_{x'=1}^{x}p_\downarrow^X(x').
\end{equation}
Here, \( p_\downarrow^X \) and \( q_\downarrow^X \) are the distributions obtained by rearranging \( p^X \) and \( q^X \) in descending order, respectively. When \( q^X \prec p^X \), the following relationship holds for any convex function \( f \)~\cite{bhatia2013matrix,Sagawa_Majorization_2022}:  
\begin{equation}
\sum_{x\in\mathcal{X}}f\left(q^X(x)\right)\leq\sum_{x\in\mathcal{X}}f\left(p^X(x)\right).
\end{equation}
By setting $f(x)=x\ln x$ and taking the negative of the both sides, we obtain the following relation of the Shannon enropy:
\begin{equation}
q^X\prec p^X\implies S(q^X)\geq S(p^X).\label{eq_S3}
\end{equation}
  \subsection{Proof of the main Theorem}
The proof of the main Theorem is structured into two propositions.
\begin{prop}\label{prop_A1}
$\Wass\left(p_0^{XY},p_\tau^{XY}\right)\geq \mathcal{E}.$
\end{prop}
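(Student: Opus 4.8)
The plan is to exploit the Kronecker-delta structure of the initial distribution, $p_0^{XY}(x,y)=\delta_{x,y}\,p^Y(y)$, which leaves no freedom in the transport plan, together with the fact that moving probability off the diagonal $x=y$ within a fixed $Y$-fiber costs at least one jump.

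First I would argue that the admissible transport plan is unique. Because the dynamics in Eq.~(\ref{eq_2}) never change the $Y$-coordinate, every $\pi\in\Pi(p_0^{XY},p_\tau^{XY})$ couples only states sharing the same $y$, so it has the form $\pi_{(x,y),(x',y)}$ as in Eq.~(\ref{eq_7}). The marginal constraint $\sum_{x}\pi_{(x,y),(x',y)}=p_0^{XY}(x',y)=\delta_{x',y}\,p^Y(y)$ vanishes for $x'\neq y$, so nonnegativity forces $\pi_{(x,y),(x',y)}=0$ whenever $x'\neq y$; the remaining marginal then yields $\pi_{(x,y),(y,y)}=p_\tau^{XY}(x,y)$. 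Thus within each fiber all mass starts at $(y,y)$ and is simply redistributed to reproduce $p_\tau^{XY}(\cdot,y)$.

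Next I would bound the cost. Since $d_{(x,y),(x',y)}$ is the minimum number of Markov jumps needed to go from $(x',y)$ to $(x,y)$, we have $d_{(y,y),(y,y)}=0$ and $d_{(x,y),(y,y)}\geq 1$ for every $x\neq y$ (if no such path exists the coefficient is $+\infty$, which only strengthens the inequality). Evaluating Eq.~(\ref{eq_7}) on the unique plan therefore gives
\begin{equation}
\Wass\left(p_0^{XY},p_\tau^{XY}\right)=\sum_{x,y}d_{(x,y),(y,y)}\,p_\tau^{XY}(x,y)\geq\sum_{y}\sum_{x\neq y}p_\tau^{XY}(x,y)=\mathcal{E}.
\end{equation}

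I do not expect a genuine obstacle. The only points needing care are that $\pi$ must respect the $Y$-fibering (which is exactly the assumption behind Eq.~(\ref{eq_2}) that $Y$ undergoes no transitions) and that the $d=+\infty$ case is harmless for a lower bound. If one prefers not to identify the plan explicitly, the same conclusion follows from the elementary estimate $\pi_{(y,y),(y,y)}\leq\min\{p_0^{XY}(y,y),p_\tau^{XY}(y,y)\}\leq p_\tau^{XY}(y,y)$ for each $y$, combined with $\sum_{x}p_\tau^{XY}(x,y)=p^Y(y)$ and $d_{(x,y),(x',y)}\geq 1$ off the diagonal.
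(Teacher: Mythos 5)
Your proof is correct, but it takes a genuinely different route from the paper's. The paper invokes the general comparison $\Wass\left(p_0^{XY},p_\tau^{XY}\right)\geq\frac{1}{2}\sum_{x,y}\left|p_\tau^{XY}(x,y)-p_0^{XY}(x,y)\right|$ (the total-variation lower bound, valid because $d\geq 1$ between distinct states) and then computes the total-variation distance between the diagonal initial distribution and $p_\tau^{XY}$ to be exactly $\mathcal{E}$, using the conservation of the $Y$-marginal. You instead observe that the Kronecker-delta initial condition forces the transport plan to be unique, $\pi_{(x,y),(y,y)}=p_\tau^{XY}(x,y)$, so that $\Wass\left(p_0^{XY},p_\tau^{XY}\right)=\sum_{x,y}d_{(x,y),(y,y)}\,p_\tau^{XY}(x,y)$ holds as an identity, and then use $d_{(x,y),(y,y)}\geq 1$ off the diagonal. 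Both arguments are sound; your fiber-by-fiber uniqueness argument and your fallback estimate $\pi_{(y,y),(y,y)}\leq p_\tau^{XY}(y,y)$ are each valid. What your route buys is self-containedness (no appeal to the Wasserstein/total-variation comparison, which the paper asserts without proof) and, more substantively, an exact closed form for the Wasserstein distance that makes immediately transparent why equality $\Wass=\mathcal{E}_\tau$ holds under condition (C), where $d_{(x,y),(y,y)}=1$ for all $x\neq y$ — a fact the paper must establish separately in its optimality proof. The paper's route is marginally shorter if the total-variation bound is taken as known.
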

\begin{proof}
Since the Wasserstein distance is greater than or equal to the total variation distance, we obtain
\begin{equation}
\Wass\left(p_0^{XY},p_\tau^{XY}\right)\geq\frac{1}{2}\sum_{x,y}\abs{p^{XY}_\tau(x,y)-p^{XY}_0(x,y)}.
\end{equation}
Since the marginal distribution of \( Y \) is fixed as \( p_0^Y(y) = p_\tau^Y(y) = p^Y(y) \), the absolute value on the right-hand side can be calculated as
\begin{equation}
\abs{p^{XY}_\tau(x,y)-p^{XY}_0(x,y)}=
\begin{cases}
p^{Y}(y)-p^{XY}_\tau(y,y) & x=y, \\
p^{XY}_\tau(x,y) & x\neq y.
\end{cases}
\end{equation}
By using $p^{Y}(y)-p^{XY}_\tau(y,y)=\sum_{x(\neq y)}p^{XY}_\tau(x,y)$, we get
\begin{align}
\frac{1}{2}\sum_{x,y}\abs{p^{XY}_\tau(x,y)-p^{XY}_0(x,y)}=\sum_{x\neq y}p^{XY}_\tau(x,y)=\mathcal{E}.
\end{align}
\end{proof}
\begin{prop}\label{prop_A2}
$S\left(\tilde{p}^Y_\mathcal{E}\right)$ is monotonically and strictly increasing in $\mathcal{E}\in[0,1-p_\downarrow^Y(1)]$.
\end{prop}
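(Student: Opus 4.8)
The plan is to prove Proposition~\ref{prop_A2} by making the piecewise description of $\tilde{p}^Y_\mathcal{E}$ in Eq.~(\ref{eq_12}) fully explicit as a function of $\mathcal{E}$, and then to verify strict monotonicity of the Shannon entropy either by direct differentiation on each interval where $M_\mathcal{E}$ is constant, or — more robustly — by a majorization argument combined with a strictness check at the breakpoints. First I would analyze how $M_\mathcal{E}$ behaves: from its definition in step (iii) as the largest $m$ with $(\mathcal{E}-\sum_{y=m+1}^n p^Y_\downarrow(y))/(m-1) < p^Y_\downarrow(m)$, I expect $M_\mathcal{E}$ to be a nondecreasing, piecewise-constant, left-continuous step function of $\mathcal{E}$ on $[0,1-p^Y_\downarrow(1)]$, jumping upward at the finitely many values of $\mathcal{E}$ where the tail $\sum_{y=m+1}^n p^Y_\downarrow(y)$ becomes exactly ``absorbable'' into a flat plateau of height $p^Y_\downarrow(m)$. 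On each maximal subinterval where $M_\mathcal{E}\equiv M$ is constant, $\tilde{p}^Y_\mathcal{E}$ has the form: one entry equal to $1-\mathcal{E}$, then $M-1$ equal entries each equal to $(\mathcal{E}-c_M)/(M-1)$ where $c_M:=\sum_{y=M+1}^n p^Y_\downarrow(y)$ is a constant on that subinterval, and then the fixed tail entries $p^Y_\downarrow(M+1),\dots,p^Y_\downarrow(n)$.

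On such a subinterval I would compute the entropy explicitly,
\begin{align}
S\left(\tilde{p}^Y_\mathcal{E}\right) = &-(1-\mathcal{E})\ln(1-\mathcal{E}) - (\mathcal{E}-c_M)\ln\frac{\mathcal{E}-c_M}{M-1} \notag\\
&- \sum_{y=M+1}^n p^Y_\downarrow(y)\ln p^Y_\downarrow(y),
\end{align}
and differentiate with respect to $\mathcal{E}$; the tail term is constant, and the derivative of the first two terms is $\ln(1-\mathcal{E}) - \ln\frac{\mathcal{E}-c_M}{M-1} = \ln\frac{(M-1)(1-\mathcal{E})}{\mathcal{E}-c_M}$. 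Strict positivity of this derivative is exactly the statement that $(M-1)(1-\mathcal{E}) > \mathcal{E}-c_M$, i.e.\ $1-\mathcal{E} > (\mathcal{E}-c_M)/(M-1) = \tilde{p}^Y_\mathcal{E}(2)$, which holds because $\tilde{p}^Y_\mathcal{E}$ is by construction in descending order with leading entry $1-\mathcal{E}$, and on the relevant range $p^Y_\downarrow(1)<1-\mathcal{E}$ forces the ordering to be strict at the top (the equal bulk entries are all $\le p^Y_\downarrow(2)\le p^Y_\downarrow(1)<1-\mathcal{E}$, with the inequality $1-\mathcal{E}>\tilde p^Y_\mathcal E(2)$ strict whenever $\mathcal E<1-p^Y_\downarrow(1)$). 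Hence $S(\tilde{p}^Y_\mathcal{E})$ is strictly increasing on the interior of each subinterval.

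It then remains to glue the pieces together across the breakpoints of $M_\mathcal{E}$. I would argue that $\tilde{p}^Y_\mathcal{E}$ depends continuously on $\mathcal{E}$ even at a jump of $M_\mathcal{E}$: at such a point the new flat block that gets absorbed enters with per-site value exactly $p^Y_\downarrow(m)$, matching the old tail value, so no entry of $\tilde{p}^Y_\mathcal{E}$ has a discontinuity and hence $S(\tilde p^Y_\mathcal E)$ is continuous in $\mathcal E$. A continuous function that is strictly increasing on the interior of each piece of a finite partition is strictly increasing overall. Alternatively — and this is the cleaner route I would ultimately present — I would show directly that $\mathcal{E}_1 < \mathcal{E}_2$ in $[0,1-p^Y_\downarrow(1)]$ implies $\tilde{p}^Y_{\mathcal{E}_2} \prec \tilde{p}^Y_{\mathcal{E}_1}$ (majorization), using that increasing $\mathcal{E}$ decreases the top entry $1-\mathcal{E}$ while flattening the bulk, so partial sums of the descending rearrangement can only decrease; then (\ref{eq_S3}) gives $S(\tilde{p}^Y_{\mathcal{E}_2}) \ge S(\tilde{p}^Y_{\mathcal{E}_1})$, and strictness follows because the two distributions are genuinely distinct (their top entries differ) and neither is a permutation of the other. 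The main obstacle I anticipate is handling the bookkeeping of $M_\mathcal{E}$ at its jump points cleanly — in particular verifying that the definition via the strict inequality (\ref{eq_15}) makes $\tilde{p}^Y_\mathcal{E}$ well-defined and continuous there — rather than any analytic difficulty, since the per-interval derivative computation is elementary.
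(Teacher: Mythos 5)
Your proposal is sound in substance, and your primary route is genuinely different from the paper's. The paper proves the proposition purely by majorization: it first shows that $M_\Error$ is non-increasing in $\Error$ (via the monotonicity in $m$ of $f(m)=\sum_{y=1}^{m}[p_\downarrow^Y(y)-p_\downarrow^Y(m)]$), then verifies the partial-sum inequalities $\sum_{y'=1}^{y}\tilde{p}_{\Error,\downarrow}^Y(y')\geq\sum_{y'=1}^{y}\tilde{p}_{\Error',\downarrow}^Y(y')$ case by case on the ranges $y\geq M_\Error$, $M_{\Error'}\leq y\leq M_\Error$, and $2\leq y\leq M_{\Error'}$, and finally invokes Eq.~(\ref{eq_S3}). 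Your explicit-entropy computation --- writing $S(\tilde{p}_\Error^Y)$ in closed form on each interval where $M_\Error$ is constant and checking $\partial_\Error S=\ln\left[(M-1)(1-\Error)/(\Error-c_M)\right]>0$ because $\tilde{p}_\Error^Y(2)<p_\downarrow^Y(M_\Error)\leq p_\downarrow^Y(1)<1-\Error$ --- is more elementary and makes the mechanism transparent (the top entry strictly exceeds the bulk it feeds), at the price of the gluing step at the breakpoints of $M_\Error$; your observation that the flat-block value equals exactly $p_\downarrow^Y(m)$ at a transition is the right way to close that. Your secondary route is precisely the paper's. On strictness, both you and the paper are a little casual: Eq.~(\ref{eq_S3}) as stated is non-strict, and the strict inequality needs strict Schur concavity of $S$ together with the fact that the two distributions are not permutations of one another (which, as you note, follows from their distinct top entries $1-\Error\neq 1-\Error'$); your remark here is actually the more careful of the two.

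One concrete error to fix: you assert that $M_\Error$ is \emph{nondecreasing} in $\Error$ and jumps upward. It is the opposite. The admissibility condition (\ref{eq_15}) for $m$ is equivalent to $\Error<\sum_{y=m+1}^{n}p_\downarrow^Y(y)+(m-1)\,p_\downarrow^Y(m)$, so increasing $\Error$ shrinks the admissible set and $M_\Error$ is non-increasing (it equals $n$ at $\Error=0$ for a full-support $p^Y$ and decreases to $2$ as $\Error\to 1-p_\downarrow^Y(1)$). This slip does not affect your per-interval derivative or the continuity argument, which are direction-agnostic, but it would derail the majorization case analysis if you set up the three ranges under the wrong ordering; for $\Error'>\Error$ the cases must be indexed using $M_{\Error'}\leq M_\Error$, as in the paper.
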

\begin{proof}
Recalling the definition of \(M_\Error\), if there exists a maximum \(m \in \{1, 2, \cdots, n\}\) that satisfies \(\sum_{y=1}^{m} p_\downarrow^Y(y) - (m-1)p_\downarrow^Y(m) < 1 - \mathcal{E}\), this \(m\) is defined as \(M_\Error\). First, we prove that \(M_\Error\) decreases with respect to \(\mathcal{E}\). Let \(f(m) \coloneqq \sum_{y=1}^{m} p_\downarrow^Y(y) - m p_\downarrow^Y(m)\). For \(m' > m\), we have
  \begin{align}
    f(m')&=\sum_{y=1}^{m'}p_\downarrow^Y(y)-m'p_\downarrow^Y(m')\\
    &=\sum_{y=1}^{m'}\left[p_\downarrow^Y(y)-p_\downarrow^Y(m')\right]\\
    &\geq\sum_{y=1}^{m'}\left[p_\downarrow^Y(y)-p_\downarrow^Y(m)\right]\\
    &\geq\sum_{y=1}^{m}\left[p_\downarrow^Y(y)-p_\downarrow^Y(m)\right]\\
    &=f(m).
  \end{align}
Therefore, \( f(m) \) increases with respect to \( m \). Consequently, \( M_\Error \), which is determined as the maximum \( m \) satisfying \( f(m) < 1 - \mathcal{E} \), decreases with respect to \(\mathcal{E}\).\par
We next recall the definition of $\tilde{p}_{\Error}^Y$:
  \begin{align}
    \tilde{p}_{\Error}^Y=
    \begin{cases}
      p^Y(y), & \sigma_{p^Y}(y)> M_\Error, \\
      \displaystyle\frac{\sum_{y=1}^{M_\Error}p_\downarrow^Y(y')-(1-\Error)}{M_\Error-1} & 2\leq\sigma_{p^Y}(y)\leq M_\Error\\
      1-\Error, & \sigma_{p^Y}(y)=1.
    \end{cases}
  \end{align}
We note that the permutation \(\sigma_{p^Y}\), which rearranges \(p^Y\) in descending order, coincides with \(\sigma_{\tilde{p}_{\mathcal{E}}^Y}\), the permutation rearranging \(\tilde{p}_{\mathcal{E}}^Y\) in descending order. Now, consider \(\mathcal{E}' > \mathcal{E}\). We aim to show that \(\tilde{p}^Y_{\mathcal{E}'} \prec \tilde{p}^Y_{\mathcal{E}}\), i.e.,  
  \begin{align}\label{eq_B_9}
    \forall y,\quad \sum_{y'=1}^y\tilde{p}_{\Error,\downarrow}^Y(y')\geq\sum_{y'=1}^y\tilde{p}_{\Error',\downarrow}^Y(y').
  \end{align}
First, when \( y \geq M_\mathcal{E} \), \( y \geq M_{\mathcal{E}'} \) also holds. Therefore, we have
  \begin{align}
    \sum_{y'=1}^y\tilde{p}_{\Error,\downarrow}^Y(y')=\sum_{y'=1}^y\tilde{p}_{\Error',\downarrow}^Y(y')=\sum_{y'=1}^yp_{\downarrow}^Y(y').
  \end{align}
When $M_{\Error'}\leq y\leq M_\Error$,
  \begin{align}
    \sum_{y'=1}^y\tilde{p}_{\Error,\downarrow}^Y(y')&=\sum_{y'=1}^{M_\Error}p_{\downarrow}^Y(y')-(M_\Error-y)\tilde{p}_{\Error,\downarrow}^Y(M_\Error),\\
    \sum_{y'=1}^y\tilde{p}_{\Error',\downarrow}^Y(y')&=\sum_{y'=1}^{y}p_{\downarrow}^Y(y')
  \end{align}
holds, which yields
  \begin{align}
    &\sum_{y'=1}^y\tilde{p}_{\Error,\downarrow}^Y(y')-\sum_{y'=1}^y\tilde{p}_{\Error',\downarrow}^Y(y')\\
    &=\sum_{y'=y+1}^{M_\Error}p_{\downarrow}^Y(y')-(M_\Error-y)\tilde{p}_{\Error,\downarrow}^Y(M_\Error)\\
    &=\sum_{y'=y}^{M_\Error}\left[p_{\downarrow}^Y(y')-\tilde{p}_{\Error,\downarrow}^Y(M_\Error)\right]\\
  &\geq0.
  \end{align}
Additionally, when \( 2 \leq y \leq M_{\mathcal{E}'} \),
  \begin{align}
    \sum_{y'=1}^y\tilde{p}_{\Error,\downarrow}^Y(y')&=\sum_{y'=1}^{M_\Error}p_{\downarrow}^Y(y')-(M_\Error-y)\tilde{p}_{\Error,\downarrow}^Y(M_\Error),\\
    \sum_{y'=1}^y\tilde{p}_{\Error',\downarrow}^Y(y')&=\sum_{y'=1}^{M_{\Error'}}p_{\downarrow}^Y(y')-(M_{\Error'}-y)\tilde{p}_{\Error',\downarrow}^Y(M_{\Error'})\\
  \end{align}
holds. Therefore,
  \begin{align}
   &\sum_{y'=1}^y\tilde{p}_{\Error,\downarrow}^Y(y')-\sum_{y'=1}^y\tilde{p}_{\Error',\downarrow}^Y(y)
   \\&=\sum_{y'=M_{\Error'}+1}^{M_\Error}\left[p_{\downarrow}^Y(y')-\tilde{p}_{\Error,\downarrow}^Y(M_\Error)\right]\notag\\
    &\hspace{50pt}+\sum_{y'=2}^{M_{\Error'}}\left[\tilde{p}_{\Error',\downarrow}^Y(M_{\Error'})-\tilde{p}_{\Error,\downarrow}^Y(M_\Error)\right]\\
  &\geq0.
  \end{align}
From the above, it follows that \(\tilde{p}^Y_{\mathcal{E}'} \prec \tilde{p}^Y_{\mathcal{E}}\). Moreover, by definition, when \(\mathcal{E}' \neq \mathcal{E}\), \(\tilde{p}^Y_{\mathcal{E}'} \neq \tilde{p}^Y_{\mathcal{E}}\) holds.
\end{proof}
From Propositions \ref{prop_A1} and \ref{prop_A2}, the upper bound (\ref{eq_16}) on the consumed mutual information is obtained when \(\Wass\left(p_0^{XY}, p_\tau^{XY}\right) = \Wass\) is fixed.
  \subsection{Formalism by Lorenz curves}
  While \(\tilde{p}_\Error^Y\) was constructed using histograms of probability distributions in Sec.\ref{sec_3}, this construction can also be explained using Lorenz curves. A Lorenz curve is a polyline uniquely associated with a probability distribution and is constructed as follows~\cite{Sagawa_Majorization_2022}. First, arrange the probability distribution \(p^Y\) in descending order and obtain \(p_\downarrow^Y\). Next, plot the cumulative probability \(\sum_{y'=1}^y p_\downarrow^Y(y')\) as a function of state \(y\). The polyline obtained by connecting these points in addition to the origin \((0, 0)\) is the Lorenz curve for the probability distribution \(p^Y\).  

The Lorenz curve visualizes the concentration of probabilities in the distribution. This can be understood as follows: consider two probability distributions \(p^Y\) and \(q^Y\), and assume that the Lorenz curve of \(p^Y\) lies above that of \(q^Y\). That is, for all \(y\), \(\sum_{y'=1}^y p_\downarrow^Y(y') \geq \sum_{y'=1}^y q_\downarrow^Y(y')\). Then, from the definition of majorization introduced in Appendix A1, \(p^Y \succ q^Y\) holds, which implies \(S(p^Y) \leq S(q^Y)\). Therefore, a probability distribution with a Lorenz curve higher on the graph has smaller Shannon entropy and is more concentrated.  

Using this property of Lorenz curves, the construction of \(\tilde{p}_\Error\) can be intuitively explained. First, if \(1 - \Error < p_\downarrow^Y(1)\), \(\tilde{p}_\Error = p^Y\). For \(1 - \Error \geq p_\downarrow^Y(1)\), first draw the Lorenz curve of \(p^Y\) (the black line in Fig.~\ref{fig_Lorenz} (a)). Next, extend a half-line upward and to the right from point \(\mathrm{P}_1\) defined by coordinate \((1, 1 - \Error)\) (the red point in Fig.~\ref{fig_Lorenz} (a)), and rotate it around \(\mathrm{P}_1\) to approach the Lorenz curve of \(p^Y\) from above (the red dotted line in Fig.~\ref{fig_Lorenz} (a)). When the half-line touches the Lorenz curve of \(p^Y\), stop the rotation, and call the point of contact \(\mathrm{P}_2\) (Fig.~\ref{fig_Lorenz} (b)). The state \(y\) corresponding to point \(\mathrm{P}_2\) is \(M_\Error\) as defined in Sec.~\ref{sec_3}. Finally, connect the three points (origin, \(\mathrm{P}_1\), and \(\mathrm{P}_2\)) with polyline, and extend the curve beyond \(\mathrm{P}_2\) along the Lorenz curve of \(p^Y\). The probability distribution corresponding to this modified Lorenz curve (the red line in Fig.~\ref{fig_Lorenz} (c)) is defined as \(\tilde{p}_\Error\).
  \begin{figure*}[tbp]
\centering
\includegraphics[scale=1]{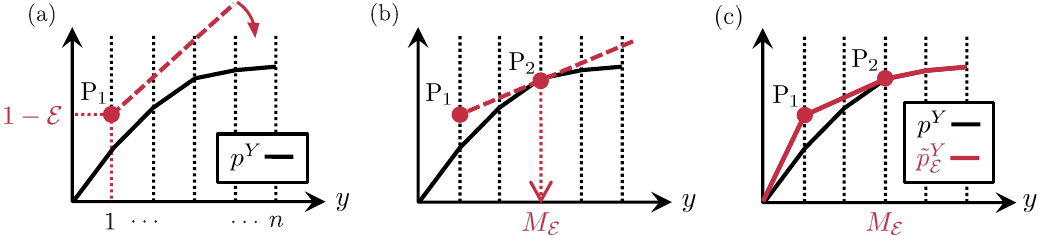}
\caption{(a) The Lorenz curve of \(p^Y\) and the half-line extended from point \(\mathrm{P}_1\) with coordinate \((1, 1 - \Error)\).  
(b) The construction of point \(\mathrm{P}_2\), where the half-line from \(\mathrm{P}_1\) touches the Lorenz curve of \(p^Y\).  
(c) The construction of the Lorenz curve for \(\tilde{p}_\Error\).  }
\label{fig_Lorenz}
\end{figure*}
\section{Optimal protocol for a two level system}
In the example of the two-level system discussed in Sec.~\ref{sec_5}, the optimal transport from the initial distribution defined by Eq.~(\ref{eq_34}) to the optimal final distribution defined by Eq.~(\ref{eq_35}), as well as to the non-optimal final distribution defined by Eq.~(\ref{eq_36}), was constructed. To provide a general method for constructing these protocols, we present the protocol for optimal transport from the initial distribution defined by Eq.~(\ref{eq_34}) to the final distribution defined by
\begin{align}
  p_\tau^{XY}=\left[
  \begin{array}{cc}
    p-\Delta_1 & \Delta_2 \\
    \Delta_1 & 1-p-\Delta_2
  \end{array}\right].
\end{align}
\subsection{For fixed mobility}
The protocol that achieves the fundamental limit (\ref{eq_32}) for fixed mobility transports probabilities along the optimal transport plan from the initial distribution to the final distribution under a constant and uniform thermodynamic force \(F\). The optimal transport plan is to transport a probability \(\Delta_1\) from state \((1,1)\) to \((2,1)\) and a probability \(\Delta_2\) from state \((2,2)\) to \((1,2)\). 

To implement this using a constant thermodynamic force \(F\) by controlling the transition rates in a Markov jump process, the time interval \([0, \tau]\) is divided into two intervals: \(t \in [0, \tau/2]\) and \(t \in [\tau/2, \tau]\). In individual intervals, the transports of \(\Delta_1\) and \(\Delta_2\) are performed.  

First, in the interval \(t \in [0, \tau/2]\), only the transport of probability \(\Delta_1\) from state \((1,1)\) to \((2,1)\) is carried out. An example of the time evolution implemented in this interval is  
\begin{align}
  p_t^{XY}(2,1)&=\frac{\Delta_1t}{\tau/2},\\
  p_t^{XY}(1,1)&=p-\frac{\Delta_1t}{\tau/2},
\end{align}
while keeping the probabilities $p_t^{XY}(2,1)$ and $pt^{XY}(2,2)$. To implement this under a constant thermodynamic force \( F \), the transition rates \( R_t^{X|1}(x, x') \) should be set as
\begin{align}
  R_t^{X|1}(2,1)&=\frac{\Delta_1}{\tau/2}\frac{1}{1-e^{-F}}\frac{1}{p-\frac{\Delta_1t}{\tau/2}},\\
  R_t^{X|1}(1,2)&=\frac{e^{-F}}{1-e^{-F}}\frac{1}{t}.
\end{align}
The remaining transition rates are set to \( R_t^{X|2}(x, x') = 0 \).\par
In the interval \(t \in [\tau/2, \tau]\), the probability transport of \(\Delta_2\) from state \((2,2)\) to \((1,2)\) is performed. The example of the time evolution to be implemented in this case is
\begin{align}
  p_t^{XY}(1,2)&=\frac{\Delta_2t}{\tau/2},\\ 
  p_t^{XY}(1,1)&=1-p-\frac{\Delta_2t}{\tau/2}
\end{align}
with \(p_t^{XY}(1,1)\) and \(p_t^{XY}(2,1)\) kept constant. To implement this under a constant and uniform  thermodynamic force \(F\), the transition rates \(R_t^{X|2}(x,x')\) should be set as
\begin{align}
  R_t^{X|2}(1,2)&=\frac{\Delta_2}{\tau/2}\frac{1}{1-e^{-F}}\frac{1}{1-p-\frac{\Delta_2t}{\tau/2}},\\
  R_t^{X|2}(2,1)&=\frac{e^{-F}}{1-e^{-F}}\frac{1}{t}.
\end{align}
The remaining transition rates are set to \(R_t^{X|1}(x,x') = 0\). Moreover, the entropy production generated by the entire protocol over the two intervals is given by
\begin{align}
  \Sigma_\tau^{XY}&=\int_0^\tau\diff t\sum_{x\neq x',y}J_t^{X|y}(x,x')F_t^{X|y}(x,x')\label{eq_B12}\\
  &=(\Delta_1+\Delta_2)F\\
  &=\frac{\Wass(p_0^{XY},p_\tau^{XY})^2}{\langle m\rangle_\tau},
\end{align}
which achieves the bound (\ref{eq_32}). The orange solid line and gray solid line representing entropy production \(\Sigma_\tau^X\) in Fig.~\ref{fig_num1} (a) are plotted using \(\Sigma_\tau^{XY}\) calculated from Eq.~(\ref{eq_B12}), with \(\Sigma_\tau^X = \Sigma_\tau^{XY} - \abs{\Delta I_\tau^{X:Y}}\).
\subsection{For fixed activity}
Next, we present the protocol that achieves the fundamental bound (\ref{eq_33}) for fixed activity. This protocol transports probabilities along the optimal transport plan from the initial distribution to the final distribution under a constant and uniform activity \(A\) and probability current \(J\). The optimal transport plan, as well as the case of fixed mobility, transports a probability \(\Delta_1\) from state \((1,1)\) to \((2,1)\) and a probability \(\Delta_2\) from state \((2,2)\) to \((1,2)\).  

To achieve this under a constant activity \(A\) and probability current \(J = (\Delta_1 + \Delta_2)/\tau\) (which is uniquely determined by the fixed time \(\tau\) and the initial and final distributions), the time interval \([0, \tau]\) is divided into two intervals 
\(t \in \left[0, \frac{\Delta_1}{\Delta_1 + \Delta_2}\tau\right]\) and \(t \in \left[\frac{\Delta_1}{\Delta_1 + \Delta_2}\tau, \tau\right]\). In individual intervals, the transports of \(\Delta_1\) and \(\Delta_2\) are performed.\par
First, in the interval \(t \in \left[0, \frac{\Delta_1}{\Delta_1 + \Delta_2}\tau\right]\), the transport of a probability \(\Delta_1\) from state \((1,1)\) to \((2,1)\) is performed. The time evolution to be implemented is  
\begin{align}
  p_t^{XY}(2,1)=\frac{(\Delta_1+\Delta_2)t}{\tau},\quad p_t^{XY}(1,1)=p-\frac{(\Delta_1+\Delta_2)t}{\tau}
\end{align}
with \(p_t^{XY}(2,1)\) and \(p_t^{XY}(2,2)\) kept constant. To implement this under a constant activity \(A\) and probability current \(J = (\Delta_1 + \Delta_2)/\tau\), the transition rates \(R_t^{X|1}(x,x')\) should be set as 
\begin{align}
  R_t^{X|1}(2,1)&=\frac{A+J}{2(p-Jt)},\\
  R_t^{X|1}(1,2)&=\frac{A-J}{2Jt}.
\end{align}
The remaining transition rates are set to \(R_t^{X|2}(x,x') = 0\).\par
In the interval \(t \in \left[\frac{\Delta_1}{\Delta_1 + \Delta_2}\tau, \tau\right]\), the transport of a probability \(\Delta_2\) from state \((2,2)\) to \((1,2)\) is conducted. The time evolution to be implemented in this interval is
\begin{align}
  p_t^{XY}(1,2)&=\frac{(\Delta_1+\Delta_2)t}{\tau},\\
  p_t^{XY}(1,1)&=1-p-\frac{(\Delta_1+\Delta_2)t}{\tau},
\end{align}
with \(p_t^{XY}(1,1)\) and \(p_t^{XY}(2,1)\) kept constant. To implement this under a constant activity \(A\) and probability current \(J = (\Delta_1 + \Delta_2)/\tau\), the transition rates \(R_t^{X|2}(x,x')\) should be set as  
\begin{align}
  R_t^{X|2}(1,2)&=\frac{A+J}{2(p-Jt)},\\
  R_t^{X|2}(2,1)&=\frac{A-J}{2Jt}.
\end{align}
The remaining transition rates are set to \(R_t^{X|1}(x,x') = 0\). The total entropy production generated by the entire protocol is calculated as
\begin{align}
  \Sigma_\tau^{XY}&=\int_0^\tau\diff t\sum_{x\neq x',y}J_t^{X|y}(x,x')F_t^{X|y}(x,x')\label{eq_B21}\\
  &=2(\Delta_1+\Delta_2)\tanh^{-1}\left(\frac{\Delta_1+\Delta_2}{A\tau}\right)\\
  &=\Wass(p_0^{XY},p_\tau^{XY})\tanh^{-1}\frac{\Wass(p_0^{XY},p_\tau^{XY})}{\langle a\rangle_\tau},
\end{align}
which achieves the bound (\ref{eq_33}). The orange solid line and gray solid line representing entropy production \(\Sigma_\tau^X\) in Fig.~\ref{fig_num1} (b) are plotted using \(\Sigma_\tau^{XY}\) calculated from Eq.~(\ref{eq_B21}), with \(\Sigma_\tau^X = \Sigma_\tau^{XY} - \abs{\Delta I_\tau^{X:Y}}\).
\end{appendix}
%

\end{document}